\pdfoutput=1








\documentclass[preprint]{elsarticle}













\usepackage{amssymb}
\usepackage{amsmath}
\usepackage{amsthm}
\newtheorem{theorem}{Theorem}





\usepackage[figuresright]{rotating}


\newcommand{\rmd}{\mathrm{d}}



\begin{document}

\begin{frontmatter}





  \title{Effect of Bitcoin fee on transaction-confirmation process}


\author[naist]{Shoji Kasahara}
\ead{kasahara@ieee.org}

\author[naist]{Jun Kawahara}

\address[naist]{Graduate School of Information Science, Nara Institute
of Science and Technology\\
Takayama 8916-5, Ikoma, 6300192 Nara, Japan}

\begin{abstract}
  In Bitcoin system, transactions are prioritized according to
  transaction fees. Transactions without fees are given low priority
  and likely to wait for confirmation. Because the demand of micro
  payment in Bitcoin is expected to increase due to low remittance
  cost, it is important to quantitatively investigate how transactions
  with small fees of Bitcoin affect the transaction-confirmation
  time. In this paper, we analyze the transaction-confirmation time by
  queueing theory.  We model the transaction-confirmation process of
  Bitcoin as a priority queueing system with batch service, deriving
  the mean transaction-confirmation time. Numerical examples show how
  the demand of transactions with low fees affects the
  transaction-confirmation time. We also consider the effect of the
  maximum block size on the transaction-confirmation time.
\end{abstract}

\begin{keyword}

Bitcoin, blockchain, fee, transaction-confirmation time, priority queue


\end{keyword}

\end{frontmatter}


\section{Introduction}
\label{sec:intro}

Bitcoin is a digital currency system that was invented by Satoshi
Nakamoto in 2008 \cite{Nakamoto08}. Unlike the existing online payment
systems such as credit cards and debit ones, a remarkable feature of
Bitcoin system is its decentralized nature. Bitcoin does not have a
central authority to manage Bitcoin transactions. All the Bitcoin
transactions are registered in the ledger called \textit{blockchain},
and the blockchain is maintained by a volunteer-based peer-to-peer
(P2P) network. Volunteer nodes joining the P2P network hold the same
replica of the blockchain, which enables everyone to check consistency
of transactions.

From information-technology point of view, Bitcoin is fast, secure,
and has lower fees than the existing payment schemes. These features
make Bitcoin advantageous for both consumers and retailers. Due to low
fees for processing transactions, Bitcoin is expected to accelerate
the use of micro payment such as buying daily items and small amount
remittance.


Every transaction needs to be stored into a \textit{block} that a
volunteer node, called \textit{miner}, creates, and the block needs to
be appended to the tail of the blockchain when a miner succeeds in
creating the block.  An average time interval of block creation is
adjusted to be about 10 minutes.  Only transactions in blocks included
in the blockchain are admitted as valid ones, which are called
\textit{confirmed} transactions.  It is said that to avoid
double-spending, if the user receives the coin in a transaction,
he/she should wait to use it until the block including the transaction
and some subsequent blocks are created \cite{confirmation}.

When a sender creates a transaction, he/she can make the transaction
include a fee, which can be received by the miner who creates a block
that stores the transaction. There is no incentive for miners to store
transactions without fee into the block they are creating. Since the
Bitcoin system restricts the number of transactions which a block can
hold, miners may put a higher priority on transactions with a larger
fee. Therefore, the transaction-confirmation time of transactions
with a small fee tend to be much larger than those of ones with a
large fee.

Note that in micro-payment case, the fee amount of micro-payment
transactions is likely to be small due to its small amount remittance.
If the use of micro payment becomes popular in the future, the
confirmation time of transactions with small fees will be too long for
users to make micro payment.



In this paper, we consider how the growth of micro payment affects the
confirmation process of small amount transactions. We collect statistics
from the blockchain, investigating the transaction-confirmation
time. Then we model the transaction-confirmation process of Bitcoin as a
queueing system with bulk service and priority mechanism, deriving the
mean transaction-confirmation time for each-priority transaction.  In
numerical examples, we show how the transaction-confirmation time is
affected with the increase in demand of micro payment.

The rest of the paper is organized as follows. We briefly review the
related work in Section \ref{sec:relatedwork}. Section
\ref{sec:summary} shows a summary of Bitcoin system, mainly focusing
on the blockchain construction and the impact of transaction fee on
the transaction-confirmation process.  Section \ref{sec:statistics}
shows some statistics about Bitcoin, some of which are used in the
later experiments. In Section \ref{sec:analysis}, we describe the
queueing model for the confirmation process of Bitcoin system, and the
analysis of the queueing model is presented. In Section
\ref{sec:numericalexamples}, we show some numerical examples,
discussing the effect of the demand of transactions with low fee on
the confirmation time of transactions. Concluding remarks are given in
Section \ref{sec:conclusion}.

\section{Related Work}
\label{sec:relatedwork}

Recently, Bitcoin has attracted considerable attention, and been
widely studied in various research communities. For example, the
economic community studies Bitcoin system from the virtual-currency
point of view. The aspects of applications of encryption and P2P
networking are of interest in computer science \cite{DW13}. The
community of social science focuses on the incentive mechanism of
Bitcoin ecosystem. Comprehensive reviews in terms of technology
principles, history, risks and regulatory issues are well provided in
\cite{BCEM15,TS16,BMCetal15}. Almost all papers on Bitcoin are
introduced in \cite{btcresearch}. Here, we present only papers close
to our research.

One of important issues in Bitcoin is transaction fee. It is expected
that transaction fees become incentives for miners to provide much
computation power in order to verify transactions.  The authors of
\cite{MB15} investigate the trends of transaction fees by analyzing
55.5 million transaction records, revealing the regime shift of
Bitcoin transaction fees. It is shown that transactions with non-zero
fee are likely to be processed faster than those with zero fee, and
that the amount of fee doesn't affect the transaction latency
significantly. In terms of the latter claim, however, their
statistical analysis shows the tendency that transactions with small
fee are likely to wait longer than those with large fee.

Another important issue is the maximum block size. Currently, the
maximum block size is limited to 1 Mbyte due to a security reason of
spam attack \cite{PeterR15}.  It is reported in \cite{scalability}
that Bitcoin handles at most seven transactions per second (tps) due
to the maximum block size of 1 Mbyte. In order for Bitcoin to scale to
tens of thousands of tps, which is equivalent to the processing speed
of credit card transactions, enlarging the maximum block size is
considered. There exist many discussions about the effect of the
maximum block size on the incentive of miners. To the best of the
authors' knowledge, however, there is no work for quantitatively
investigating the impact of the enlargement of the maximum block size
on the transaction-confirmation time.

Block confirmation time also affects the scalability of Bitcoin.
Sompolinsky and Zohar \cite{SZ13,SZ15} propose a modification to the
blockchain, called GHOST, so that block confirmation time becomes
about 600 times shorter than the original Bitcoin without loosing the
security of Bitcoin.  Kiayias and Panagiotakos \cite{KP15} show a formal
security proof and the speed-security tradeoff of GHOST. A security
issue about shortening block confirmation time is double spending,
which is studied in \cite{BDEWW13,KAC12}.


The block-construction process can be modeled as a queueing system
with batch service, in which a group of customers leave the system
simultaneously at service completion.  There exist literature for the
analysis of queues with batch service.  Chaudhry and Templeton
consider an M/$\rm G^{\rm B}$/1 queueing system with batch service
\cite{CT81,CT83}. In M/$\rm G^{\rm B}$/1, customers arrive at the
system according to a Poisson process, the number of servers is one,
and the service time distribution follows a general distribution. If
there exist customers in queue at a service completion, the server
accommodates customers as a batch, where the batch size is limited to
some constant. Using supplemental variable technique, the authors
derive the joint distribution of the remaining service time and the
number of customers in queue. In \cite{CT81,CT83}, however, the
priority mechanism is not taken into consideration. To the best of the
authors' knowledge, priority queueing system with batch service has not
been fully studied yet.  In this paper, we model the
transaction-confirmation process of Bitcoin as a queueing system in
which both priority mechanism and batch service are taken into
consideration.

\section{Summary of Bitcoin System}
\label{sec:summary}

In this section, we give a brief summary of Bitcoin system.  The
readers are referred to \cite{Antonopoulos14} for details.

\subsection{Transaction Confirmation Process}
\label{sec:summary:subsec:tcp}

The Bitcoin system realizes virtual currency with two types of
information data: \textit{transactions} and \textit{blocks}.  A
transaction is the base of value transfer between payer and payee,
while a block is a data unit for storing several confirmed
transactions.

When a payer makes payment to a payee in Bitcoin system, the payer
issues a transaction into the Bitcoin P2P network. The transaction
contains the amount of payment, the source account(s) of the payer,
the destination account(s) of the payee, and the fee that the payer
pays to a miner (and the others). The transaction is propagated
through the P2P network, and temporally stored in memory pool of
volunteer nodes, called \textit{miners}.

The role of miner nodes is to generate a block, which contains
transactions to be validated.  Miners try to solve a mathematical
problem based on a cryptographic hash algorithm for block generation
(referred to as \textit{proof-of-work} \cite{Nakamoto08}).  The miner
who finds its solution first becomes a winner, and is awarded
reward\footnote{In 2017, the output of the coinbase for one-block
  mining is 12.5 bitcoin. The output value of it is halved every
  210,000 blocks. Since the mining time for one block is 10 minutes on
  average, this corresponds to a four-year halving schedule.}, which
consists of some fixed value called coinbase and the fees of
transactions included in the block, and the right to add a new block
to the blockchain.  The solution to the problem is included in the new
block, and the miners try again to solve a new mathematical problem
for the next block. This competition process is called
\textit{mining}. Embedding the solution for the current block to the
next block plays an important role for preventing from falsification
of previous blocks.  The difficulty of problems in Bitcoin mining is
automatically adjusted by the system so that the time interval between
consecutive block generations is 10 minutes on average.

Since miners do not gain any profit from transactions without fee and
the total size of transactions that a block can store is limited to
1 Mbyte in the Bitcoin system, some miners may ignore such
transactions. Therefore, it is considered that the
transaction-confirmation time of transactions without fee is much
bigger than those of ones with fee.

In \cite{MB15}, the authors study trends of Bitcoin transaction fee
conventions by analyzing the transaction fees paid with 55.5 million
transactions recorded in the blockchain. They find that the
confirmation time of transactions without fee are longer than those
with fee. It is also reported that difference between
transaction-confirmation times for different fees are not
significant. In terms of the latter claim, however, their statistical
analysis reveals that transactions with fee of 0.0005 are likely to
wait longer than those with fee of 0.001. (See Table 2 in
\cite{MB15}.)  If the demand of transactions for micro payment
increases in future, those transactions may suffer from a very long
confirmation time because payers of micro payment are not willing to
pay fee and the resulting priority of their transactions is low.

\section{Bitcoin Transaction Statistics}
\label{sec:statistics}

In this section, we show some statistics of Bitcoin blocks and
transactions.  We collected data of blocks and transactions from
\verb|blockchain.info| \cite{blockchaininfo}. We chose the two-year
mining period from October 2013 to September 2015.

\subsection{Basic Statistics}
\label{sec:statistics:subsec:bs}

\begin{table}[t]
  \centering
  \caption{Block-generation time.}
  \label{tab:statistics:bgt}
  \begin{tabular}{lr}\hline
    Mean  [s] & 544.09 \\
    Variance & $2.9277 \times 10^{5}$ \\
    Maximum  [s] & 6,524 \\
    Minimum  [s] & 0 \\
    Median  [s] & 377 \\\hline
  \end{tabular}
\end{table}

Table \ref{tab:statistics:bgt} shows statistics of block-generation
time. The statistics are calculated from 115,921 blocks in the
measurement period.  In this table, the mean block-generation time is
544.09 s, approximately 9 minutes.  This is smaller than 10 minutes,
the average time interval between consecutive block generations. This
result, however, supports that Bitcoin mining is managed according to
the system protocol.

\begin{table}[t]
  \centering
  \caption{Number of transactions in a block.}
  \label{tab:statistics:nt}
  \begin{tabular}{lr}\hline
    Mean [transactions] & 529.27 \\
    Variance & $2.5152 \times 10^5$ \\
    Maximum [transactions] & 12,239 \\
    Minimum [transactions] & 0 \\
    Median [transactions] & 386 \\\hline
  \end{tabular}
\end{table}

Table \ref{tab:statistics:nt} shows the number of transactions in a
block. Here, we count not only transactions issued by users, but also
coinbase transactions. The mean number of transactions in a block is
529.27, and hence the mean rate of transaction processing is 1.05
transaction/s.

\begin{table}[t]
  \centering
  \caption{Transaction size in byte.}
  \label{tab:statistics:ts}
  \begin{tabular}{lr}\hline
    Mean  & 571.34 \\
    Variance & $3.7445\times 10^6$\\
    Maximum & 999657 \\
    Minimum & 62 \\
    Median & 259 \\\hline
  \end{tabular}
\end{table}

Table \ref{tab:statistics:ts} shows the statistics of the transaction
size in byte. The mean transaction size for the two-year period is
571.34 bytes. Since the maximum block size is 1 Mbyte, we can roughly
approximate the maximum number of transactions in a block equal to
1750.3.

\begin{table}[t]
  \centering
  \caption{Cumulative frequency of fee amount for transactions.}
  \label{tab:statistics:cfrat}
  \begin{tabular}{lr}\hline
    BTC & Frequency \\\hline
    0 & 1378501 \\
    0.00001 & 3050709 \\
    0.0001 & 42881857 \\
    0.001  & 60723356 \\
    0.01   & 61219997 \\
    0.1  & 61236481 \\
    1 & 61236972 \\
    10 & 61237045 \\ \hline
  \end{tabular}
\end{table}

In order to investigate the impact of transactions with small fee on
the transaction-confirmation time, we classify transactions into
priority classes. Remind that the confirmation time of transactions
with a small fee are longer than those with a large fee \cite{MB15}.
This implies that transactions without fee are given the lowest
priority for the block-inclusion process. Therefore, we classify
transactions into two types, high (H) and low (L), in terms of the
amount of fee added to a transaction. Transactions with fee greater
than or equal to 0.0001 BTC\footnote{In May 2017, 0.0001 BTC is about
  0.12 USD.} are classified into H class, while those without fee
smaller than 0.0001 BTC are prioritized as L class. We show the
cumulative frequency of the fee amount for transactions in Table
\ref{tab:statistics:cfrat}.

\begin{table}[t]
  \centering
  \caption{Transaction-type statistics.}
  \label{tab:statistics:tct}
  \begin{tabular}{lrrr}\hline
    Statistic     & Classless & H & L \\\hline
    Number of transactions & 61,353,014 & 57,058,947 & 4,294,067 \\
    Mean TCT [s] & 1075.0 & 874.13 & 3744.1 \\
    Variance of TCT & $1.8989 \times 10^8$  & $8.4505 \times 10^7$ & $1.5826 \times 10^9$ \\
    Maximum of TCT & $3.1045\times 10^7$ & $3.1045\times 10^7$ & $2.6244\times 10^7$ \\
    Minimum of TCT & 0 & 0 & 0 \\
    Median of TCT & 510 & 502 & 640 \\
    Mean arrival rate  & 0.97275 &  0.90466 & 0.068082 \\\hline 
  \end{tabular}
\end{table}

Table \ref{tab:statistics:tct} shows the statistics of transactions by
type. Here, classless indicates the statistics for all the
transactions, and TCT is the transaction-confirmation time. The mean
arrival rate is the number of transactions per day.  In this
table, the mean transaction-confirmation time for the overall
transactions is 1{,}075.0 [s] $\approx 17.917$ minutes, almost twice
greater than the mean block-generation time. 

In terms of priority-type statistics, the mean transaction-confirmation
time for L class is greater than that for H class, and its difference
is 2{,}870.0 [s] $\approx$ 47.833 minutes.


\subsection{Fee-amount distribution and transaction-arrival rate}

\begin{figure}[t]
  \centering
  \includegraphics[width=.8\textwidth,clip]{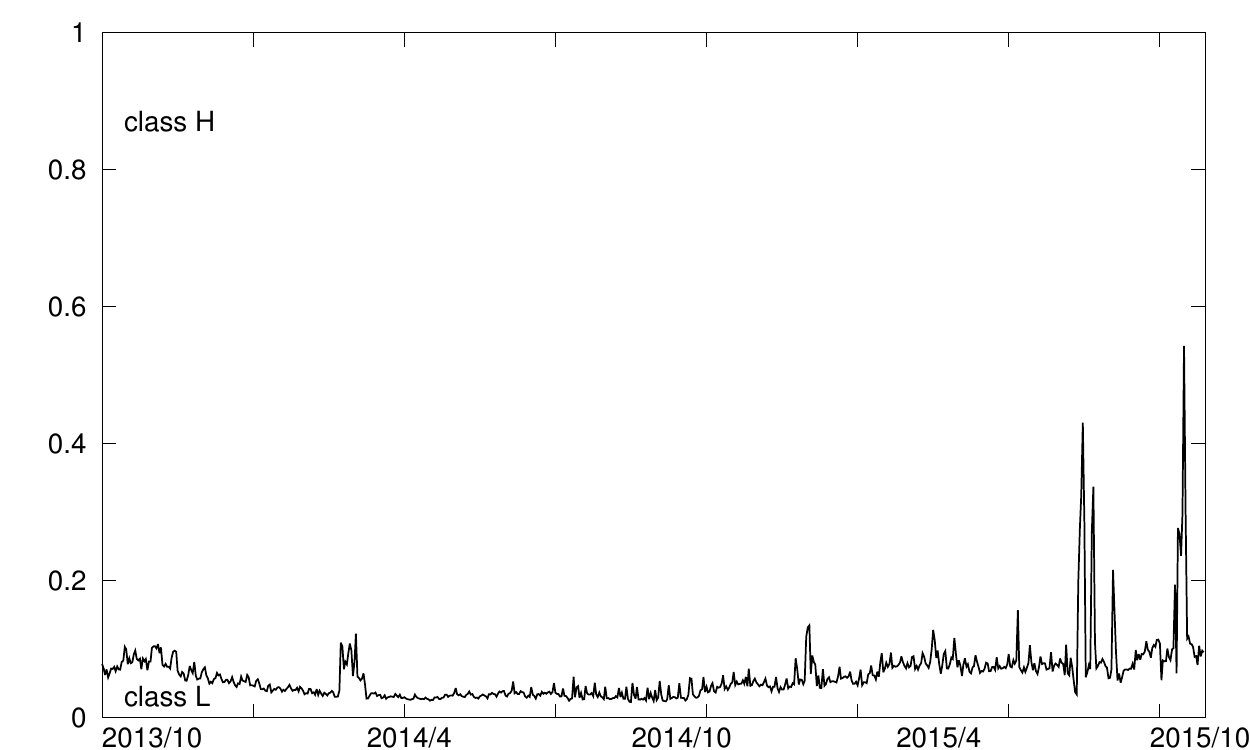}
  \caption{Trend of fee-amount distribution over time.}
  \label{fig:statistics:payment}
\end{figure}

Figure \ref{fig:statistics:payment} illustrates how the fee-amount
distribution changes over time. The fee-amount distribution is the
ratio of the amount of H/L-transactions to that of transactions issued
in one day.  Each region in Figure \ref{fig:statistics:payment} shows
the percentage of transactions in two different classes.

In Figure \ref{fig:statistics:payment}, the percentage of each class
fluctuates in a small range, except that ${\rm L}$-class has a spike
from July 2015 to October 2015. It is reported in
\cite{blockchaininfo} that the number of transactions per day exhibits
a rapid increase during the same period.  From these observations, we
can claim that the percentages of H and L classes remain almost the
same even though the volume of transactions increases rapidly.

\begin{figure}[t]
  \centering
  \includegraphics[width=.8\textwidth,clip]{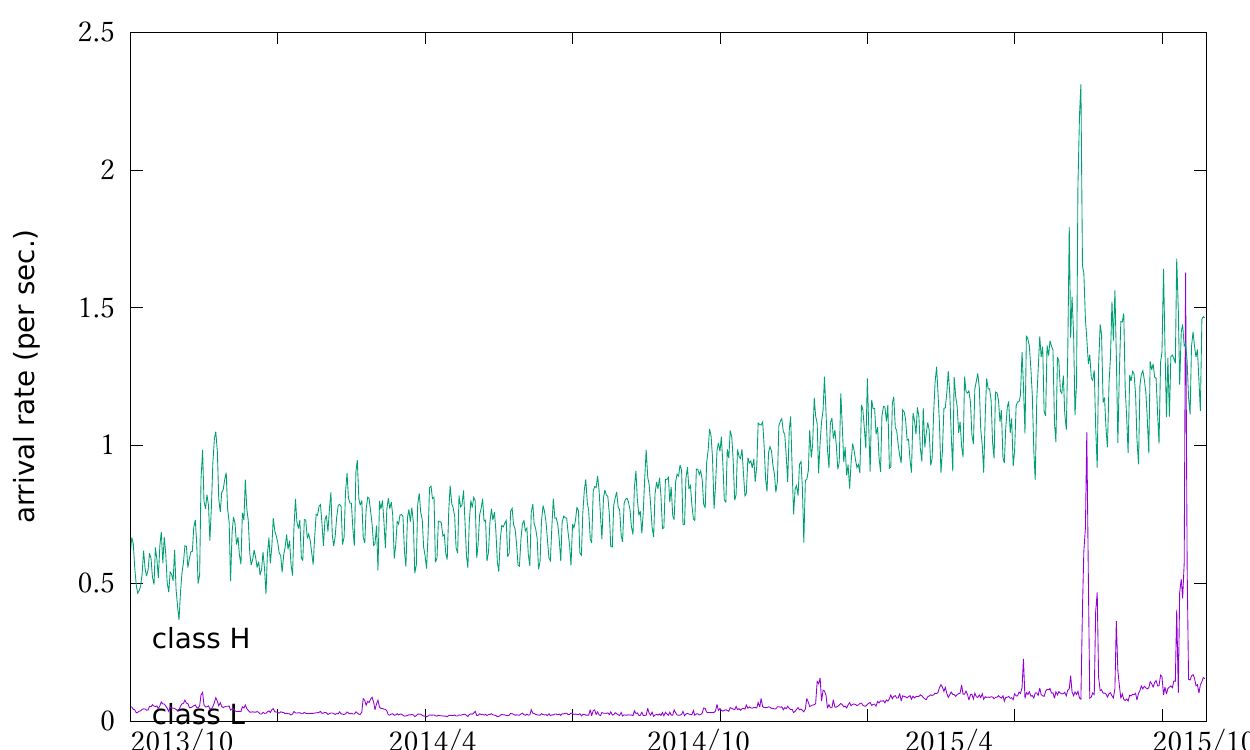}
  \caption{Trend of transaction-arrival rates of two priority
    classes.}
  \label{fig:statistics:category}
\end{figure}

Figure \ref{fig:statistics:category} represents the
transaction-arrival rate of each class. The horizontal axis is day,
and its origin is October 1, 2013. We observe in this figure that the
transaction-arrival rate of each class gradually increases with
fluctuation over time. The exceptional spikes are observed in the
range of 650 to 730, the same period in Figure
\ref{fig:statistics:payment}.

From these figures, we can expect that the transaction-arrival rate
monotonically grows, keeping the same percentage of fee-amount
class. 



\section{Priority Queueing Analysis}
\label{sec:analysis}

This section describes our queueing model of Bitcoin transaction
processing and main results of transaction-confirmation time. The
detailed derivations are presented in \ref{app:sec:analysis}.

\subsection{Mean Transaction-Confirmation Time}

Let $S_i$ denote the $i$th block-generation time.  In this paper, we
regard a block-generation time as a service time. We assume
$\{S_i\}$'s are independent and identically distributed (i.i.d.) and
have a distribution function $G(x)$. Let $g(x)$ denote the probability
density function of $G(x)$.  The mean block-generation time $E[S]$ is
given by
\[
E[S] = \int_0^\infty x \rmd G(x) = \int_0^\infty x g(x) \rmd x.
\]

A transaction arrives at the system according to a Poisson process
with rate $\lambda$. Transactions arriving to the system are served in
a batch manner. A batch service starts when a transaction arrives at
the system in idle state. The consecutive transactions arriving at the
system are served in a batch until the number of batch size equals
$b$. That is, newly arriving transactions are included into the
creating block as long as the resulting block size is smaller than the
maximum block size $b$. This assumption follows from the behavior of the
default Bitcoin client described in \cite{Antonopoulos14}.

Let $N(t)$ denote the number of transactions in system at time $t$,
and $X(t)$ denote the elapsed service time at $t$.  We define
$P_n(x,t)$ ($n=1,2,\ldots, x,t \geq 0$) and $P_0(t)$ as\footnote{We
  follow the definition of $P_n(x,t)$ in \cite{CT81,CT83}. Rigorously,
  we define $P_n(x,t)$ as
  \[
    P_n(x,t) = \frac{\rmd}{\rmd x}\Pr\{N(t)=n, X(t)\leq x \}
  \]
  assuming it exists.
}
\begin{eqnarray*}
  P_n(x,t) \rmd x &=& \Pr\{N(t)=n, x<X(t)\leq x+\rmd x \},\\
  P_0(t) &=& \Pr\{N(t)=0\}.
\end{eqnarray*}
Note that $P_n(x,t) \rmd x$ is the joint probability that at time $t$,
there are $n$ transactions in system and the elapsed service time lies
between $x$ and $x+\rmd x$. We also define limiting distributions
$P_n(x) = \lim_{t\rightarrow\infty}P_n(x,t)$ and
$P_0=\lim_{t\rightarrow\infty}P_0(t)$.

Let $\xi (x)$ denote the hazard rate of $S$, which is given by
\[
\xi (x) = \frac{g(x)}{1-G(x)}.
\]

Let $T$ denote the sojourn time of a transaction. In the context of
Bitcoin, $T$ is the transaction-confirmation time, i.e., the time
interval from the time epoch at which a user issues a transaction to
the point when the block including the transaction is confirmed. Then,
we have the following theorem.

\begin{theorem}

The mean transaction-confirmation time $E[T]$ is given by
\begin{eqnarray}
  E[T] &=& {1\over 2\lambda^2 (b-\lambda E[S])}
      \left(\rule{0pt}{18pt}
       \sum_{k=1}^b \alpha_k\left[\rule{0pt}{14pt}
         b(b-1)+\{(b+1)b-k(k-1)\}\lambda E[S] \right.\right.\nonumber\\
     & & \left.\left. +(b-k)\lambda^2E[S^2]\rule{0pt}{14pt}\right]
        - \lambda\left\{b(b-1)-\lambda^2
          E[S^2]\right\}\rule{0pt}{18pt}\right),
  \label{eq:meannumtran:01}
\end{eqnarray}
where
\[
\alpha_k = \int_0^\infty P_k(x) \xi(x) \rmd x.
\]
\end{theorem}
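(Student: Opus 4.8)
\section*{Proof proposal}

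The plan is to analyze the system by the supplementary variable technique, in the spirit of the batch-service queue of \cite{CT81,CT83}, and then recover $E[T]$ through Little's law. First I would write the steady-state balance equations for the elapsed-service-time densities $P_n(x)$. Since arrivals form a Poisson process of rate $\lambda$ and are never blocked, during a service the number in system changes only by arrivals, so for $n\ge 2$ one obtains $\frac{\rmd}{\rmd x}P_n(x) = -(\lambda+\xi(x))P_n(x) + \lambda P_{n-1}(x)$, the $n=1$ equation lacking the final term. The batch structure enters only through the values at $x=0$ and the empty-state balance: a service completing with $m$ transactions present moves $\min(m,b)$ into the confirmed block and leaves $\max(m-b,0)$ behind. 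Hence a new service starts with $n\ge 1$ left over at rate $\alpha_{n+b}$, an idle system is entered exactly when a service completes with $m\le b$ present, yielding $\lambda P_0 = \sum_{k=1}^b \alpha_k$, and the boundary conditions are $P_n(0)=\alpha_{n+b}$ for $n\ge2$ and $P_1(0)=\lambda P_0 + \alpha_{b+1}$.

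Next I would introduce the partial generating function $P(x,z)=\sum_{n\ge1}P_n(x)z^n$. Summing the balance equations collapses them into the first-order linear ODE $\frac{\partial}{\partial x}P(x,z) = -(\lambda(1-z)+\xi(x))P(x,z)$, whose solution is $P(x,z)=P(0,z)\,e^{-\lambda(1-z)x}(1-G(x))$, using $\exp\bigl(-\int_0^x \xi(u)\,\rmd u\bigr)=1-G(x)$. The boundary generating function is $P(0,z)=\lambda P_0 z + z^{-b}\bigl(A(z)-\sum_{k=1}^b\alpha_k z^k\bigr)$, where $A(z)=\sum_{k\ge1}\alpha_k z^k$. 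The key closure relation is $A(z)=\int_0^\infty P(x,z)\xi(x)\,\rmd x$; substituting the solved $P(x,z)$ and using $(1-G(x))\xi(x)=g(x)$ gives $A(z)=P(0,z)\,G^{*}(\lambda(1-z))$, with $G^{*}(s)=\int_0^\infty e^{-sx}g(x)\,\rmd x$ the LST of the block-generation time.

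Combining these two relations eliminates $A(z)$ and gives the closed form $P(0,z)=\bigl(\lambda P_0 z^{b+1}-\sum_{k=1}^b\alpha_k z^k\bigr)\big/\bigl(z^b-G^{*}(\lambda(1-z))\bigr)$. Integrating $P(x,z)$ over $x$ and adding $P_0$ then produces the generating function of the number in system, $\Pi(z)=P_0 + P(0,z)\,\dfrac{1-G^{*}(\lambda(1-z))}{\lambda(1-z)}$, using $\int_0^\infty e^{-sx}(1-G(x))\,\rmd x=(1-G^{*}(s))/s$. Finally, by Little's law $E[T]=E[N]/\lambda=\Pi'(1)/\lambda$, so it remains to evaluate $\Pi'(1)$.

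I expect the main obstacle to be the evaluation of $\Pi'(1)$, since at $z=1$ both the factor $(1-G^{*}(\lambda(1-z)))/(\lambda(1-z))$ and the ratio defining $P(0,z)$ are $0/0$: the numerator of $P(0,z)$ vanishes precisely because $\lambda P_0=\sum_{k=1}^b\alpha_k$, while the denominator vanishes since $G^{*}(0)=1$. Resolving these removable singularities requires expanding $G^{*}(\lambda(1-z))=1-\lambda(1-z)E[S]+\tfrac12\lambda^2(1-z)^2E[S^2]+O((1-z)^3)$ together with $z^{b}$, $z^{b+1}$ and $z^{k}$ to second order in $(1-z)$, applying L'Hôpital's rule (or matching Taylor coefficients) twice, and carefully collecting terms. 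The polynomial derivatives supply the combinatorial factors $b(b-1)$, $(b+1)b$ and $k(k-1)$, while the second moment of $S$ contributes the $\lambda^2 E[S^2]$ terms; the first derivative of the denominator at $z=1$ equals $b-\lambda E[S]$, which is what produces the prefactor and reflects the stability condition $b>\lambda E[S]$. Substituting $\lambda P_0=\sum_{k=1}^b\alpha_k$ and dividing by $\lambda$ then yields \eqref{eq:meannumtran:01}.
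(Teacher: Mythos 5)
Your outline reproduces the paper's own proof (\ref{app:sec:proof:sub:theorem1}) essentially step for step: the same supplementary-variable balance equations and boundary conditions, the same pgf solution $P(z;x)=P(z;0)\{1-G(x)\}e^{-\lambda(1-z)x}$, the same closed form for the boundary pgf (your elimination of $A(z)$ is just a bookkeeping repackaging of the paper's direct summation of the boundary conditions, since $\lambda P_0 z^{b+1}-\sum_{k=1}^b\alpha_k z^k=\sum_{k=1}^b(z^{b+1}-z^k)\alpha_k$ once $\lambda P_0=\sum_{k=1}^b\alpha_k$ is used), the same expression for $P(z)$, and finally Little's law with a second-order expansion at the removable singularity $z=1$. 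Up to the last step everything checks out.

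The one genuine gap is in the final evaluation: the stated formula (\ref{eq:meannumtran:01}) does \emph{not} follow from the second-order Taylor expansion together with $\lambda P_0=\sum_{k}\alpha_k$ alone; you must also invoke the normalizing condition $P(1)=1$ (the paper's (\ref{eq:normal:01}), used via (\ref{eq:pgfpz:03})), which your outline never states. Concretely, write $N(z)=\sum_{k=1}^b(z^{b+1}-z^k)\alpha_k$ and $D(z)=z^b-G^*(\lambda-\lambda z)$, with expansions $N(z)=N_1(z-1)+N_2(z-1)^2+O((z-1)^3)$ and $D(z)=D_1(z-1)+D_2(z-1)^2+O((z-1)^3)$, where $N_1=\sum_k(b+1-k)\alpha_k$, $N_2=\tfrac12\sum_k\{(b+1)b-k(k-1)\}\alpha_k$, $D_1=b-\lambda E[S]$, $D_2=\tfrac12\{b(b-1)-\lambda^2E[S^2]\}$. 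The differentiation you describe then yields
\[
E[N]=\frac{(N_2D_1-N_1D_2)E[S]}{D_1^2}+\frac{N_1}{D_1}\cdot\frac{\lambda E[S^2]}{2},
\]
in which every term carries the $\alpha_k$'s. The $\alpha_k$-free term $-\lambda\{b(b-1)-\lambda^2E[S^2]\}$ in (\ref{eq:meannumtran:01}) arises precisely by substituting
\[
\frac{E[S]\,N_1}{b-\lambda E[S]}=1-\frac{1}{\lambda}\sum_{k=1}^b\alpha_k,
\]
which is exactly the normalization equation (\ref{eq:pgfpz:03}); the two expressions differ by $\frac{b(b-1)-\lambda^2E[S^2]}{2D_1}\bigl(\frac{E[S]N_1}{D_1}+\frac{1}{\lambda}\sum_k\alpha_k-1\bigr)$, which vanishes only on the true stationary solution, not identically in the $\alpha_k$'s. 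So ``carefully collecting terms'' cannot literally produce the printed formula as your outline stands; the repair is immediate (state the normalization condition and use it in the simplification). A secondary, lesser omission: the paper's proof also invokes Rouch\'e's theorem on $z^b=G^*(\lambda-\lambda z)$, whose $b-1$ nontrivial roots together with normalization determine the $\alpha_k$'s uniquely; this is not needed for the displayed identity itself, since the $\alpha_k$'s are defined as integrals of the stationary densities, but it is part of the paper's argument and is what makes the theorem actually computable.
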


\begin{proof}
  See \ref{app:sec:proof:sub:theorem1}.
\end{proof}

\subsection{Transaction-Confirmation Time for Priority Queueing Model}

In this subsection, we consider the system in which transactions are
prioritized for the inclusion to a block, deriving the mean
transaction-confirmation time for each priority class.

We assume that transactions are classified into $c$ priority
classes. For $1\leq i,j \leq c$, $i$ class transactions have priority
over transactions of class $j$ when $i<j$.  Let $\lambda_i$
($i=1,2,\ldots,c$) denote the arrival rate of $i$-class
transactions. We assume that $\sum_{i=1}^c \lambda_i E[S]<1$.  We
define $T_i$ as the sojourn time of class $i$ transactions. For
simplicity, we introduce the following notation
\[
\overline{\lambda}_i = \sum_{k=1}^i \lambda_k,\quad i=2,3,\ldots,c.
\]

Assuming that the system is work conserving, we have the following
theorem.
\begin{theorem}
  \label{th:priority}
  Let $T_i$ ($i=1,\ldots,c$) denote the confirmation time of class $i$
  transactions. 
  \[
    E[T_1]=f(\lambda_1),
  \]
  \[
    E[T_i] = \frac{1}{\lambda_i}
    \left(\overline{\lambda}_i f(\overline{\lambda}_i) -
      \sum_{k=1}^{i-1} \lambda_k E[T_k]
    \right), \quad i=2,3,\ldots, c,
  \]
where $f(\lambda)=E[T]$, given by (\ref{eq:meannumtran:01}).
\end{theorem}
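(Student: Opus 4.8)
The plan is to reduce the multi-class problem to repeated application of the single-class result of Theorem~1 by \emph{aggregating} priority classes and then invoking work conservation. The guiding observation is that, because the discipline is non-preemptive and work conserving, the classes $1,\ldots,i$ taken together are never delayed by the lower-priority classes $i+1,\ldots,c$: when a block is being formed it is filled with the highest-priority waiting transactions first, so the presence of lower-priority transactions can only occupy batch slots that classes $1,\ldots,i$ leave unused. Consequently, if classes $1,\ldots,i$ are merged into a single \emph{super-class}, its transactions are processed exactly as in the single-class batch-service model of Section~\ref{sec:analysis}. Since the superposition of independent Poisson streams is again Poisson, the super-class forms a Poisson input of rate $\overline{\lambda}_i$, and Theorem~1 applies verbatim to give the mean confirmation time of the super-class as $f(\overline{\lambda}_i)$.

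First I would make the aggregation precise. Fixing $i$, I regard classes $1,\ldots,i$ as a single stream and classes $i+1,\ldots,c$ as background traffic. Under non-preemptive priority with batch service, each service completion removes up to $b$ of the waiting super-class transactions (the lower classes filling only residual batch capacity), the inter-completion dynamics being governed by the block-generation distribution $G$, which is independent of block contents. This is precisely the system analysed in Theorem~1 with arrival rate $\overline{\lambda}_i$, so the mean sojourn time averaged over super-class transactions equals $f(\overline{\lambda}_i)$.

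Next I would connect this aggregate mean to the per-class means by a renewal-reward (weighted-average) argument. Over a long horizon the super-class receives $\overline{\lambda}_i$ arrivals per unit time, of which a fraction $\lambda_k/\overline{\lambda}_i$ belong to class $k$ for $k\le i$, each contributing mean confirmation time $E[T_k]$. Equating the average super-class sojourn time computed the two ways gives
\[
f(\overline{\lambda}_i)=\frac{1}{\overline{\lambda}_i}\sum_{k=1}^{i}\lambda_k\,E[T_k].
\]
Multiplying through by $\overline{\lambda}_i$ and isolating the $k=i$ term yields the stated recursion, while the case $i=1$, for which $\overline{\lambda}_1=\lambda_1$, collapses immediately to $E[T_1]=f(\lambda_1)$.

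The step I expect to be the main obstacle is the equivalence claimed in the second paragraph, namely that the super-class behaves \emph{exactly}, and not merely approximately, like the single-class system at rate $\overline{\lambda}_i$. The delicate case is a super-class arrival that finds the server generating a block which so far contains only lower-priority transactions: such an arrival must wait for the residual block-generation time, an effect with no obvious counterpart in a standalone single-class system that would have been idle. Here the work-conserving assumption is essential, since it guarantees that the server is effectively never idle while transactions are present, so that a super-class arrival always joins an in-progress block whose residual lifetime has the same law in both systems; the lower classes then influence neither the batch-completion epochs nor the batch slots available to the super-class. Turning this equivalence into a rigorous identity, rather than relying on it heuristically, is the crux of the argument.
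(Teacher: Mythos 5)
Your proposal is correct and follows essentially the same route as the paper's own proof: aggregate classes $1,\ldots,i$ into a super-class that, because lower-priority transactions cannot affect it, is served exactly as the single-class batch system of Theorem~1 with Poisson rate $\overline{\lambda}_i$, then equate $f(\overline{\lambda}_i)$ with the arrival-rate-weighted average $\sum_{k=1}^{i}(\lambda_k/\overline{\lambda}_i)E[T_k]$ and solve the resulting recursion, with $E[T_1]=f(\lambda_1)$ as the base case. The obstacle you flag at the end is precisely the caveat the paper itself concedes in the Remark and appendix --- the model is not strictly work conserving (a low-priority transaction can initiate a busy period, altering the elapsed service seen by later high-priority arrivals), so the identity holds only under the stated work-conservation assumption and is in general an approximation that is accurate at high utilization.
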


\begin{proof}
  See \ref{app:sec:proof:sub:priority}.
\end{proof}

\noindent {\bf Remark:} Strictly speaking, our priority queueing model
is not work conserving. (See \ref{app:sec:proof:sub:priority}.)
$E[T_i]$'s given in Theorem \ref{th:priority} are approximations which
work well for high utilization. When the block-generation time follows
an exponential distribution, however, $E[T_i]$'s in Theorem
\ref{th:priority} agree with simulation results, as shown in
subsection \ref{sec:nr:subsec:vc}.

In the following section of numerical examples, we consider two
priority-class case: high and low. Let $\lambda_H$ and $\lambda_L$
denote the arrival rate of high-priority transactions and that of
low-priority ones, respectively. Let also $T_H$ and $T_L$ denote the
sojourn time of high-priority transactions and that of low-priority
ones, respectively. In this two priority-class case, we obtain
\begin{eqnarray}
E[T_H] &=& f(\lambda_H), \label{eq:meandelay:02}\\
E[T_L] &=&
   \left(\frac{\lambda_H}{\lambda_L}+1\right)f(\lambda_H+\lambda_L) - 
     \frac{\lambda_H}{\lambda_L}f(\lambda_H). \label{eq:meandelay:03}
\end{eqnarray}

\section{Numerical Examples}
\label{sec:numericalexamples}

In this section, we show some numerical examples obtained from the
analytical results in previous section. First, we consider the
distribution of block-generation time with a simple mining
model. Then, we show the transaction-confirmation times of H- and
L-class transactions, investigating how the transaction-arrival rate
and the block size affect the performance measure.

\subsection{Distribution of block-generation time}
\label{sec:numericalexamples:sub:block-generation-time}

In order to calculate the mean transaction-confirmation time, we need
to determine $G(x)$, the distribution of the block-generation time.
In \cite{GKKT16}, the authors claim that the block-generation time is
exponentially distributed. They consider a hash calculation by a miner
node as a Bernoulli trial, which is independent of previous hash
calculations. This yields that the number of experiments for the first
success is given by geometric distribution, and hence it can be
approximated by exponential distribution.  In
\ref{app:sec:block-gener-time}, we show an alternative approach to the
block-generation time distribution with extreme value theory.

In subsection \ref{sec:statistics:subsec:bs}, we showed that the mean
block-generation time is 544.09 [s]. That is, the mean
block-generation rate is $1.8379\times 10^{-3}$.  In the following, we
assume that the block-generation time $S$ follows the exponential
distribution given by
\[
  G(x) = 1- e^{-\mu x},
\]
where $\mu=1.8379\times 10^{-3}$.

\begin{figure}[t]
  \centering
  \includegraphics[width=.8\textwidth,clip]{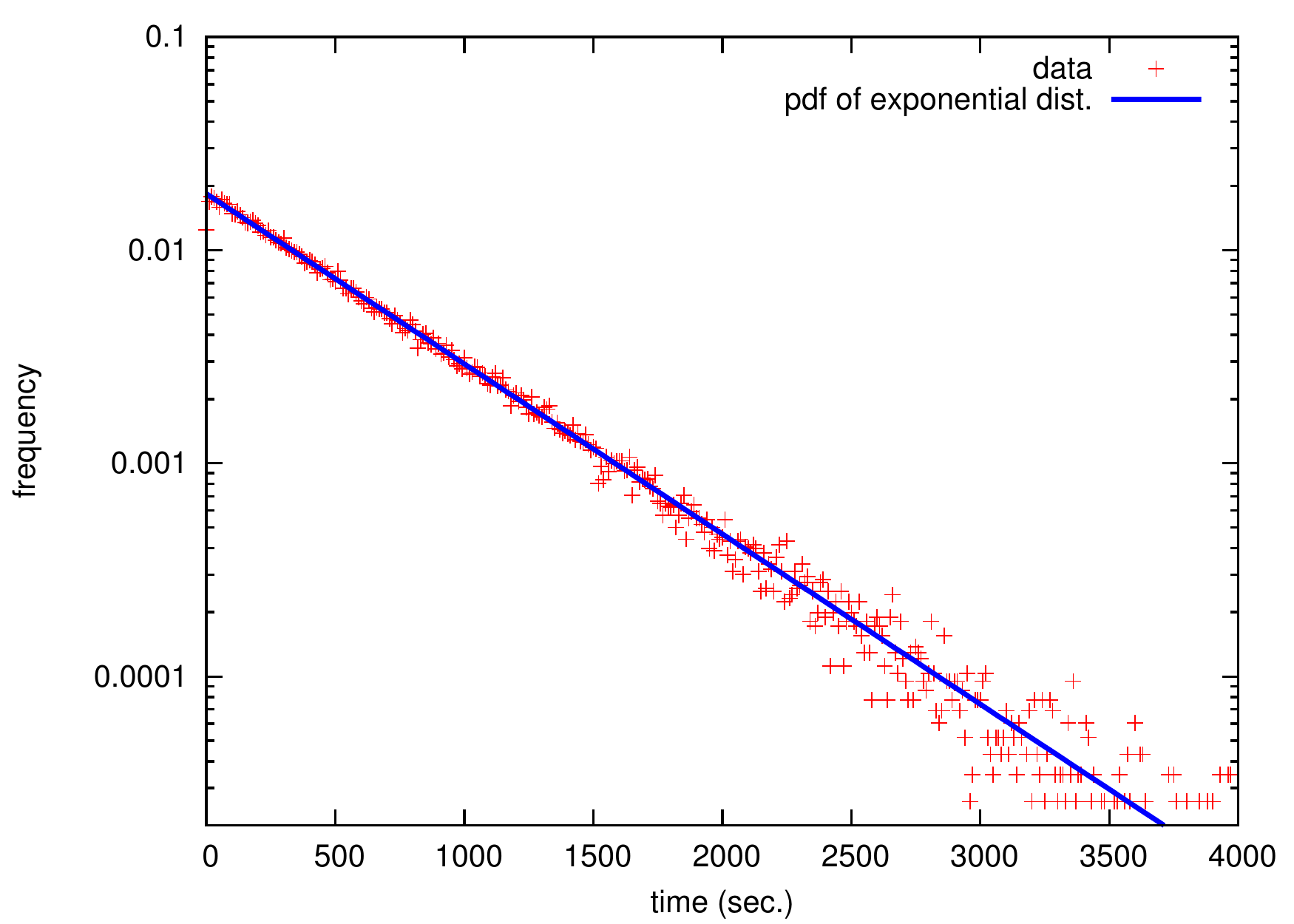}
  \caption{Relative frequency and exponential probability density
    function of block-generation time.}
  \label{fig:ne:fitting}
\end{figure}

In Figure \ref{fig:ne:fitting}, we plot the relative frequency of the
block-generation time obtained from the measured data, and the
probability density function of the above exponential
distribution. The horizontal axis represents the block-generation time
in second, and the vertical axis is the logarithmic scale of the
frequency values. This figure shows a good agreement between the
measured data and exponential distribution.

From the assumption of exponential distribution for the
block-generation time, we set $E[S]$ and $E[S^2]$ as
\[
E[S]=\frac{1}{\mu}=544.09, \quad E[S^2] = \frac{2}{\mu^2} =
5.9208\times 10^5. 
\]
The Laplace-Stieltjes transform (LST) of $G(x)$ is given by
\[
G^*(s) = \frac{\mu}{s+\mu}.
\]
With the above setting, we calculate mean sojourn times of
transactions in previous section.

\subsection{Verification and Comparison}
\label{sec:nr:subsec:vc}

\subsubsection{Verification of analysis}

\begin{figure}[t]
  \centering
  \includegraphics[bb=0 0 410 302,width=.8\textwidth,clip]{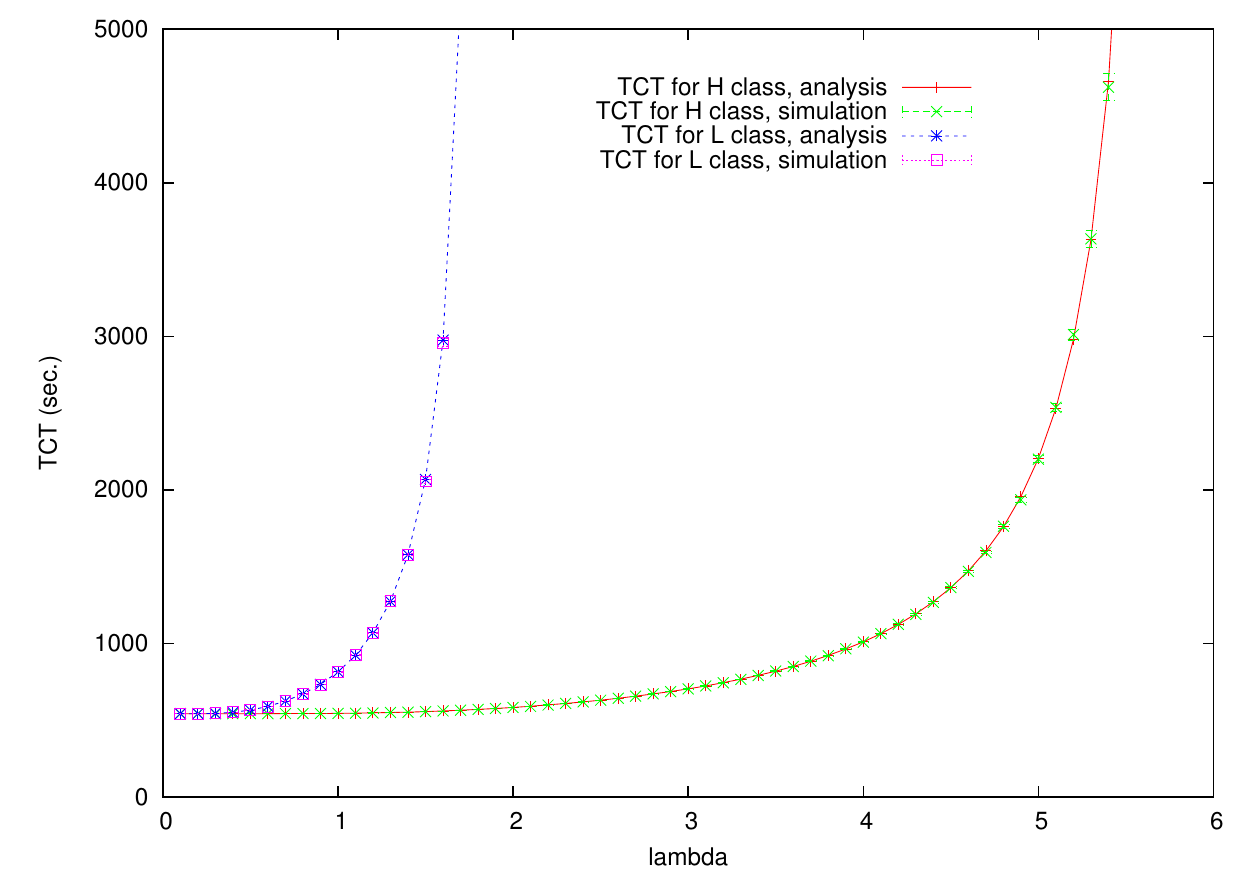}
  \caption{Comparison of analysis and simulation for the
    transaction-confirmation time: Two-priority case.}
  \label{fig:ne:comparison}
\end{figure}

In order to validate the analysis in section \ref{sec:analysis}, we
conducted discrete-event simulation experiments. The simulation model
is the same as the priority queueing one described in section
\ref{sec:analysis}. We developed a simulation program with C++, and
generated 50 samples for one estimate of the transaction-confirmation
time, calculating the 95\% confidence interval.

Figure \ref{fig:ne:comparison} represents the analytical and
simulation results of mean transaction-confirmation times for H and L
classes. Here, we set $b=1000$, and the horizontal axis is the overall
arrival rate $\lambda$, given by $\lambda=\lambda_H+\lambda_L$. We
increase $\lambda$, keeping the ratio of $\lambda_H$ to $\lambda_L$
constant. More precisely, let $\zeta$ denote the ratio of $\lambda_H$
to $\lambda_L$. From Table \ref{tab:statistics:tct}, we set $\zeta$ as
\[
\zeta = \frac{\lambda_H}{\lambda_L} =
\frac{0.90466}{0.068082} = 13.288.
\]
By using $\zeta$, $\lambda_H$ and $\lambda_L$ are described as
\[
\lambda_H = \frac{\zeta\lambda}{1+\zeta}, \quad
\lambda_L = \frac{\lambda}{1+\zeta}.
\]
With $\lambda_H$ and $\lambda_L$, we calculate $E[T_H]$ and $E[T_L]$ as the
function of $\lambda$. 

Figure \ref{fig:ne:comparison} shows overall good agreement between
the analysis and simulation for both H and L classes. Remind that
$E[T_i]$'s given in Theorem \ref{th:priority} (and hence $E[T_H]$ of
(\ref{eq:meandelay:02}) and $E[T_L]$ of (\ref{eq:meandelay:03})) are
approximations. Figure \ref{fig:ne:comparison} suggests that our
approximation analysis becomes exact when the block-generation time is
exponentially distributed.

In the following subsections, we show the numerical results calculated
by (\ref{eq:meandelay:02}) and (\ref{eq:meandelay:03}).

\subsubsection{Comparison of analysis and measurement}

\begin{table}
  \centering
  \caption{Comparison of analysis and measurement for the
    transaction-confirmation time.}
  \begin{tabular}{crrr}\hline
    Transaction Type & Arrival Rate & Measurement & Analysis \\\hline
    Classless & 0.97275 & 1,075.0 & 568.10 \\
    H         & 0.90466 & 874.13 & 562.16 \\
    L         & 0.068082 & 3,744.1 & 647.05 \\\hline
  \end{tabular}
  \label{tab:nr:comparison}
\end{table}

Next, we compare analytical results of the transaction-confirmation
time with measurement ones of Table \ref{tab:statistics:tct}.  Table
\ref{tab:nr:comparison} shows the results of measurement and analysis
for the transaction-confirmation time in three cases: classless, H
class and L class.  We calculate the transaction-confirmation time for
classless case by (\ref{eq:meannumtran:01}), while we compute $E[T_H]$
(resp.~$E[T_L]$) from (\ref{eq:meandelay:02})
(resp.~(\ref{eq:meandelay:03})). In the analytical computation, we set
$b=1750$, which is an estimate obtained from Table
\ref{tab:statistics:nt}.

In Table \ref{tab:nr:comparison}, the measurement value for classless
case is almost twice larger than the corresponding analytical one. We
also observe that discrepancies between measurement and analysis for
H and L classes are large, and that the discrepancy for L class is
significantly larger than that for H class.

First, we consider the reason of the discrepancy for classless case.
In the previous subsection, we concluded that the block-generation
time follows an exponential distribution with mean 544.095 [s]. Note
that the arrival rate of classless case is 0.97275, and hence the
system utilization $\rho$ is
\[
\rho = \lambda E[S] = 0.97275\times 544.095 = 529.27.
\]
Since the maximum block size $b$ is 1750, the system is not
overloaded. In such a situation of low utilization, a newly arriving
transaction is likely to be included in the block which is under the
current mining process.

Remind that our analytical model follows the behavior of the default
bitcoin client for updating the blockchain described in
\cite{Antonopoulos14}, that is, miners include newly arriving
transactions into the creating block as long as the resulting block
size is smaller than the maximum block size. The above comparison
result implicitly means that a newly arriving transaction is not
included in the block currently processed, but is included to the
block following the currently processed block.

This conjecture is supported by the fact that the block-generation
time follows an exponential distribution. In the underloaded
situation, the confirmation time of a newly arriving transaction
consists of the remaining generation time of the block under mining
and the generation time of the next block. Due to the memoryless
property of exponential distribution, the remaining block-generation
time also follows the same exponential distribution. This results in
that the transaction-confirmation time is almost twice larger than the
block-generation time.

Next, we consider the reason why the discrepancy between measurement
and analysis for L class is larger than that for H class. In our
analytical model, we assumed that L-class transactions in system are
served as long as the block being in service is not occupied by
H-class transactions. The large discrepancy between measurement and
analysis for L class in Table \ref{tab:nr:comparison} implies that
L-class transactions in Bitcoin system are less served than the assumed
priority queueing discipline.  As we stated in introduction, there is
little incentive for miners to build a block with transactions with small
fees. This result suggests that there exist miners who intentionally
exclude transactions with small fees from the block inclusion process.

According to the above discussion, we can conjecture that miner nodes
don't follow the behavior of the default bitcoin client, and that
there may exist miners who never include transactions with small fees
to a block.

\subsection{Mean transaction-confirmation time: classless case}

\begin{figure}[t]
  \centering
  \includegraphics[bb=0 0 410 302,width=.8\textwidth,clip]{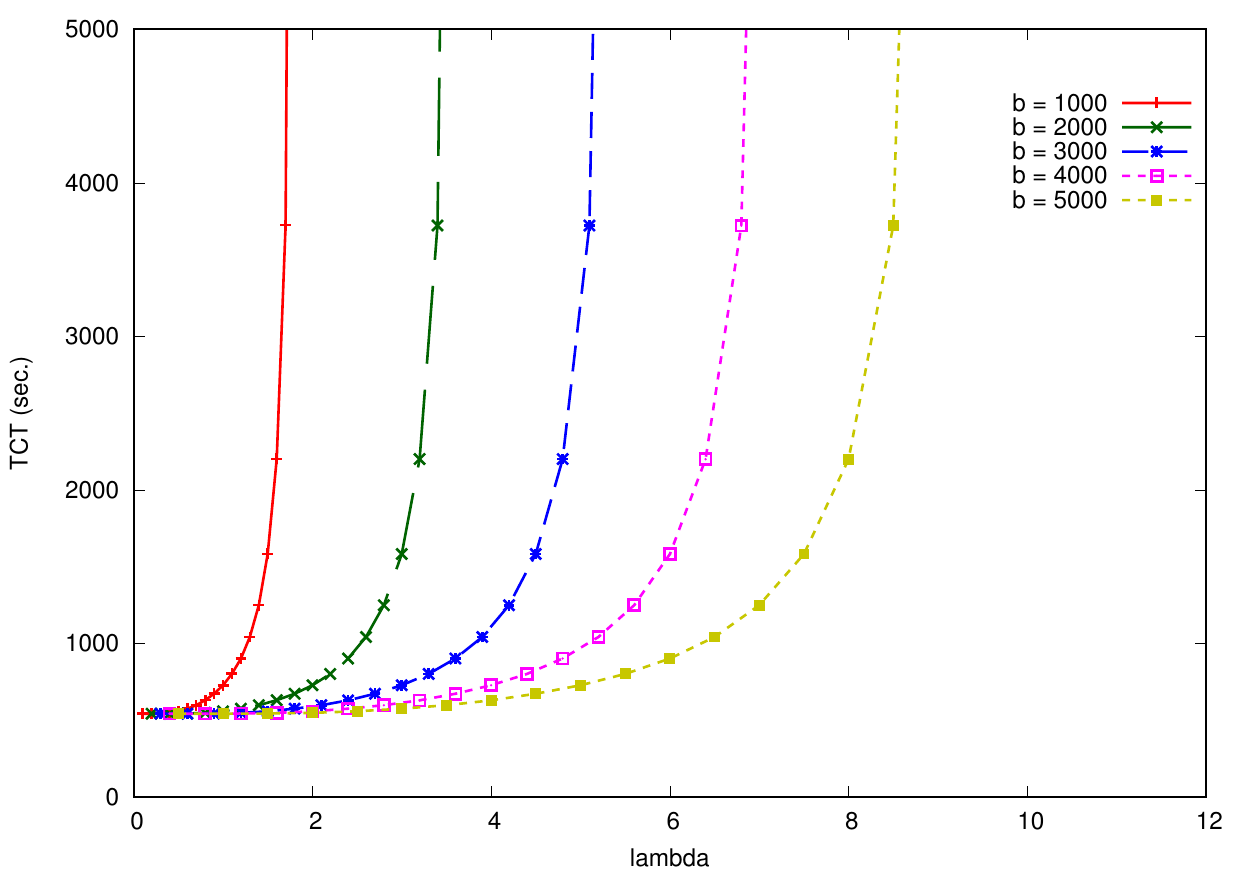}
  \caption{Mean transaction-confirmation time: classless case.}
  \label{fig:ne:lambda}
\end{figure}

In this subsection, we show the mean transaction-confirmation time for
classless case.  Figure \ref{fig:ne:lambda} represents the mean
transaction-confirmation time $E[T]$ against the overall
transaction-arrival rate $\lambda$. Here, we plot $E[T]$'s for
$b=1000$, 2000, 3000, 4000 and 5000. In this figure, $E[T]$ for each
$b$ increases from 544 s, the mean block-generation time, and grows to
infinity as $\lambda$ approaches $b/\mu$ ($=bE[S]$).

Note that the case of $b=2000$ approximately illustrates the
transaction-confirmation time under the block-size limit of 1 Mbyte. The
transaction-confirmation time rapidly increases when $\lambda$ becomes
greater than 3. Roughly speaking, the transaction-confirmation time
becomes intolerable when the number of transactions issued in one
second is greater than three. This is just the reason why the maximum
block-size limit is an important issue for the scalability of Bitcoin.

Note also that $b=3000$, 4000 and 5000 can be regarded as cases of the
maximum block size equal to 1.5 Mbytes, 2 Mbytes and 2.5Mbytes,
respectively. We can see that enlarging the maximum block size is
effective to make the transaction-confirmation time small. Even when
$b=5000$, however, the transaction-confirmation time becomes worse
around $\lambda=8$.  This result suggests that enlarging the maximum
block size is not effective for the scalability of Bitcoin.

\subsection{Mean transaction-confirmation  time:  two-priority case}

In this subsection, we investigate how the priority mechanism in
Bitcoin affects the transaction-confirmation time. We consider two
scenarios in terms of the increase in the arrival rate of
transactions. In the first scenario, $\lambda_L$ changes under a fixed
$\lambda_H$. This scenario illustrates the case in which the demand of
micro payment grows independently. In the second scenario, on the
other hand, we increase the overall transaction-arrival rate
$\lambda=\lambda_H+\lambda_L$, keeping the ratio of $\lambda_H$ to
$\lambda_L$ constant. This case corresponds to the growth of Bitcoin-user
population.

\subsubsection{Impact of increase in L-class transactions}

\begin{figure}[t]
  \centering
  \includegraphics[bb=0 0 410 302,width=.8\textwidth,clip]{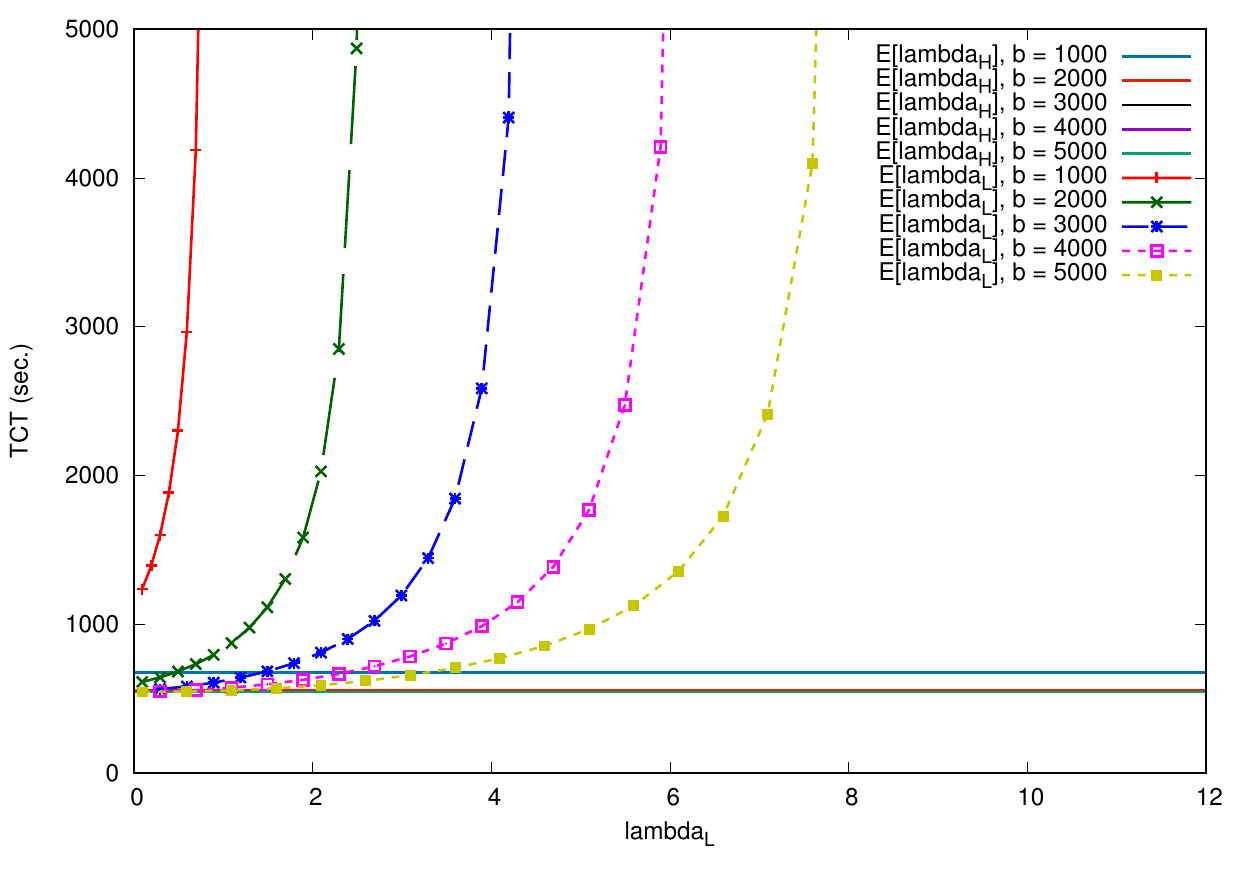}
  \caption{Mean transaction-confirmation time: two-priority
    case. ($\lambda_H=0.90466$)}
  \label{fig:ne:ex1}
\end{figure}

Figure \ref{fig:ne:ex1} represents how the mean
transaction-confirmation time is affected by the arrival rate of
L-class transactions. In this figure, $\lambda_H$ is fixed at 0.90466,
as shown in Table \ref{tab:statistics:tct}, and we plot five cases of
$b$.

In Figure \ref{fig:ne:ex1}, $E[T_L]$ for each $b$ grows exponentially
with the increase in $\lambda_L$, while $E[T_H]$'s are almost the same
and remain constant. This result shows that the priority mechanism
provides a low transaction-confirmation time for H-class transactions,
while L-class transactions are likely to suffer from a large
transaction-confirmation time when the arrival rate of L-class
transactions is high. Remind that the mean arrival rate of L-class
transactions is 0.068082, and that the current maximum block size can
be roughly approximated by $b=2000$.  Figure \ref{fig:ne:ex1}
indicates that if the arrival rate of L-class transactions becomes 30
times larger than 0.068082 ($\lambda_L\approx 2$) and the maximum
block size is limited to 1 Mbyte, the resulting confirmation time of
L-class transactions is extremely large.

\subsubsection{Growth of Bitcoin-user population}

\begin{figure}[t]
  \centering
  \includegraphics[bb=0 0 410 302,width=.8\textwidth,clip]{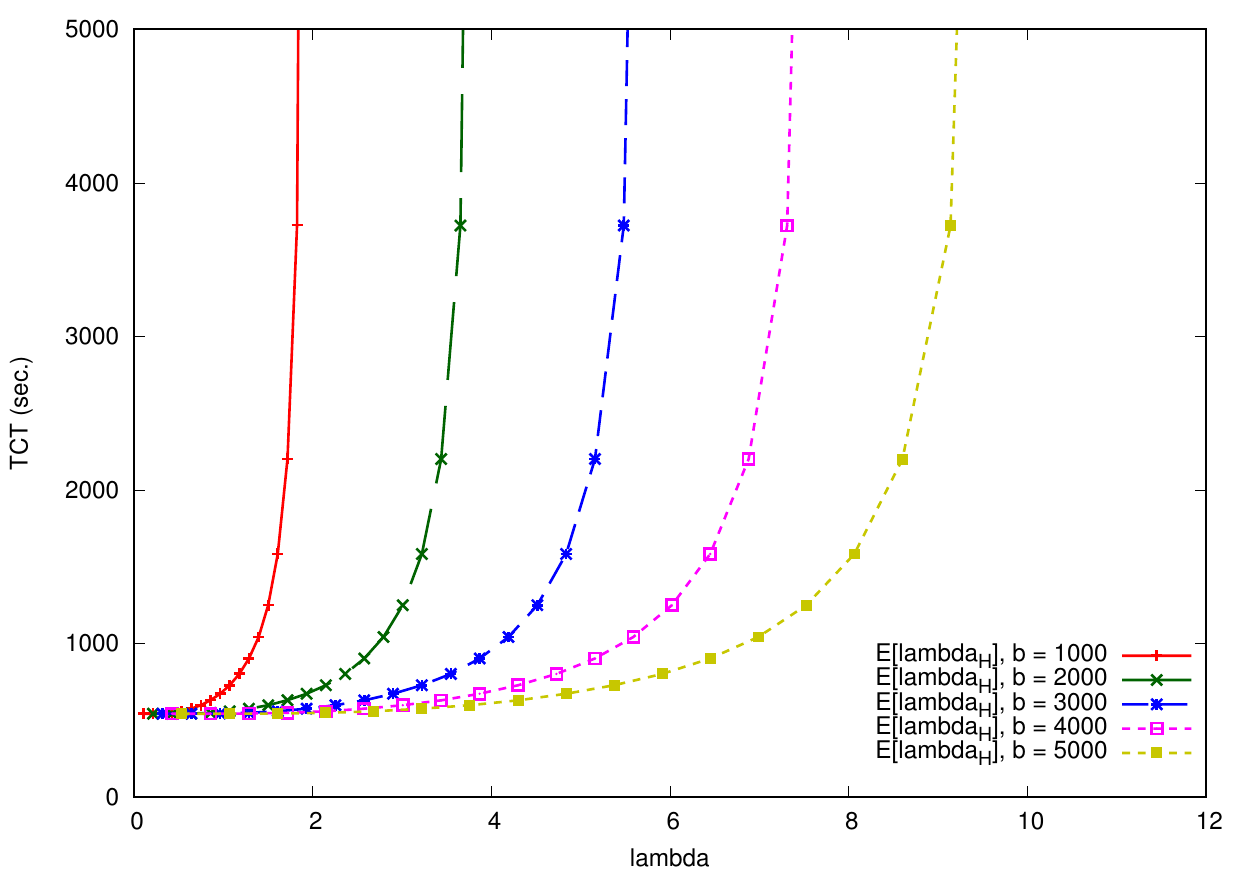}
  \caption{Mean transaction-confirmation time: high priority
    case. The ratio of $\lambda_H$ to $\lambda_L$ is fixed, and the
    overall arrival rate $\lambda$ changes.}
  \label{fig:ne:ex2:h}
\end{figure}

\begin{figure}[t]
  \centering
  \includegraphics[bb=0 0 410 302,width=.8\textwidth,clip]{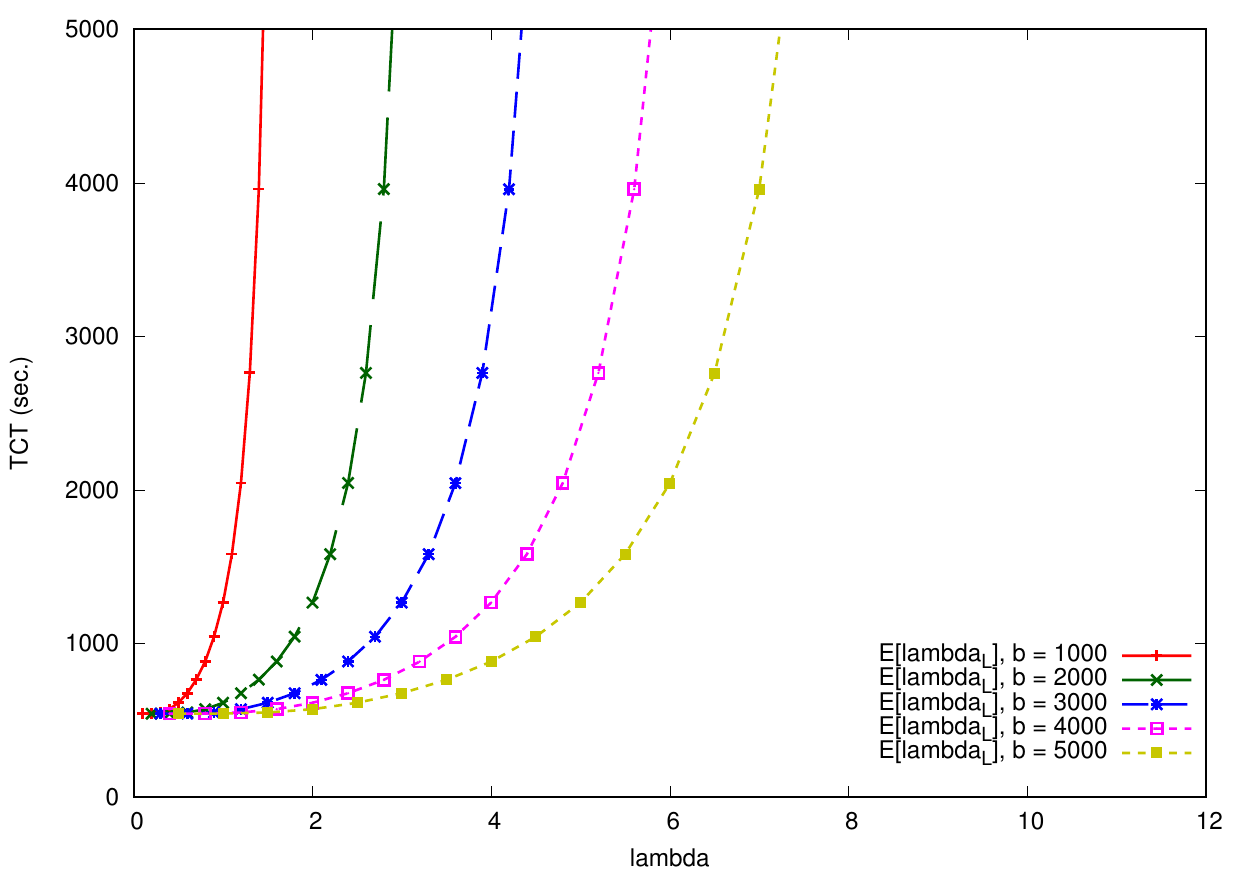}
  \caption{Mean transaction-confirmation time: low priority
    case. The ratio of $\lambda_H$ to $\lambda_L$ is fixed, and the
    overall arrival rate $\lambda$ changes.}
  \label{fig:ne:ex2:l}
\end{figure}


Figures \ref{fig:ne:ex2:h} and \ref{fig:ne:ex2:l} show how $\lambda$
affects $E[T_H]$ and $E[T_L]$, respectively. In both figures, the
horizontal axis represents $\lambda$, and we plot $E[T_H]$'s and
$E[T_L]$'s given by (\ref{eq:meandelay:02}) and
(\ref{eq:meandelay:03}) for the five cases of $b$.  In these figures,
both $E[T_L]$ and $E[T_H]$ grow exponentially with the increase in
$\lambda$. We also observe that for $b=2000$, $E[T_H]$ grows rapidly
as $\lambda$ approaches 3.  This indicates that under the current
block-size limit of 1 Mbyte, even high-class transactions suffer from
a huge confirmation time when the usage demand of Bitcoin grows three
times larger than the current situation.

Figures \ref{fig:ne:ex2:h} and \ref{fig:ne:ex2:l} also show that
increasing the maximum block size is effective to mitigate the rapid
growth of the transaction-confirmation time. When $b=5000$, the growth
of $E[T_H]$ in Figure \ref{fig:ne:ex2:h} is slow, however, $E[T_H]$
rapidly increases around $\lambda=9$. This result indicates that
increasing the maximum block size is not a fundamental solution for
the scalability of Bitcoin.

\section{Conclusion}
\label{sec:conclusion}

In this paper, we analyzed the transaction-confirmation time for
Bitcoin by queueing theory. We modeled the transaction-confirmation
process as a single-server queue with batch service and priority
mechanism. Assuming that the priority of a transaction depends only on
its input, we derived the mean confirmation time for transactions of
each priority class. Numerical examples showed that for the maximum
block size of 1 Mbyte, transactions with small fees suffer from an
extremely large confirmation time if the arrival rate of transactions
whose fee is smaller than 0.0001 BTC becomes four times larger than
the current arrival rate. We also found that enhancing the maximum
block size is not an effective way to mitigate the
transaction-confirmation time. Further study is needed for the
scalability of Bitcoin.

\section*{Acknowledgment}

The first author would like to thank Prof.~Tetsuya Takine of Osaka
University for his valuable comment on the analysis of the queueing
model in the paper.  This research was supported in part by SCAT
Foundation, and Japan Society for the Promotion of Science under
Grant-in-Aid for Scientific Research (B) No.~15H04008.

\appendix

\section{Proofs of Theorems in Priority Queueing Analysis}
\label{app:sec:analysis}

\subsection{Proof of Theorem 1}
\label{app:sec:proof:sub:theorem1}

When $\lambda E[S] < b$ holds, the system is stable and hence limiting
probabilities exist. Letting
$P_n(x) = \lim_{t\rightarrow\infty}P_n(x,t)$ and
$P_0=\lim_{t\rightarrow\infty}P_0(t)$, we obtain from the assumptions
\begin{eqnarray}
  \lambda P_0 &=& \sum_{k=1}^b \int_0^\infty P_k(x) \xi(x) \rmd x,
     \label{eq:balance:01} \\ 
  \frac{\rmd}{\rmd x}P_n(x) &=& -\{\lambda + \xi(x)\}P_n(x) + \lambda
  P_{n-1}(x), \quad n=2,3,\ldots,   \label{eq:balance:02}\\
  \frac{\rmd}{\rmd x}P_1(x) &=& -\{\lambda +
                  \xi(x)\}P_1(x).   \label{eq:balance:03} 
\end{eqnarray}
Intuitively, (\ref{eq:balance:01}) is a balance equation in which the
exiting rate from state 0 is equal to the entering rate into the same
state. The first term in the right-hand side (r.h.s.) of
(\ref{eq:balance:02}) is derived from the event that the number of
transactions does not change during a small time interval, while the
second term is yielded from the event that a transaction arrives at
the system with $n$ transactions. We derive (\ref{eq:balance:03}) in a
similar manner.

We also have the following boundary conditions
\begin{eqnarray}
  P_n(0) &=& \int_0^\infty P_{n+b}(x)\xi(x) \rmd x,\quad
             n=2,3,\ldots,   \label{eq:boundary:01}\\
  P_1(0) &=& \int_0^\infty P_{1+b}(x)\xi(x) \rmd x + \lambda P_0.
             \label{eq:boundary:02}
\end{eqnarray}
Note that the left-hand side of (\ref{eq:boundary:01}) is the
probability that there exist $n$ transactions in system at the
beginning of the service time. This event occurs just after the
service completion in the state with $n+b$ transactions. (Remind that
$b$ transactions are served simultaneously when the number of
transactions in system is greater than or equal to $b$.)  The equation
of (\ref{eq:boundary:02}) can be derived in a similar manner, noting
that the service with one transaction starts when a transaction
arrival occurs at system in idle (the second term in the r.h.s.~of
(\ref{eq:boundary:02})).

The normalizing condition is given by
\begin{equation}
  \label{eq:normal:01}
  P_0 + \sum_{n=1}^\infty \int_0^\infty P_n(x) \rmd x = 1.
\end{equation}

We define the following probability generating functions (pgf's)
\begin{eqnarray*}
  P(z;x) &=& \sum_{n=1}^\infty P_n(x) z^n,\\
  P(z) &=& P_0 + \int_0^\infty P(z;x) \rmd x.
\end{eqnarray*}
Multiplying (\ref{eq:balance:02}) by $z^n$ and (\ref{eq:balance:03})
by $z$, and summing over $n=1,2,\ldots$, we obtain
\begin{equation}
  \label{eq:pzx:01}
  P(z;x) = P(z;0)\{1-G(x)\}\exp\{-\lambda (1-z)x\}.
\end{equation}

From the boundary conditions (\ref{eq:boundary:01}) and
(\ref{eq:boundary:02}), we also have
\begin{equation}
  \label{eq:boundarypgf:01}
  P(z;0) ={\sum_{k=1}^b(z^{b+1}-z^k) \over z^b - G^*(\lambda - \lambda z)}
     \cdot \int_0^\infty P_k(x) \xi(x) \rmd x,
\end{equation}
where $G^*(s)$ is the LST of $G(x)$ and given by
\[
G^*(s) = \int_0^\infty e^{-sx} \rmd G(x).
\]
For notational simplicity, we define $\alpha_k$ ($k=1,2,\ldots,b$) as
\[
\alpha_k = \int_0^\infty P_k(x) \xi(x) \rmd x.
\]

From Rouche's theorem (see, for example, \cite{Takagi91}), it is
shown that the equation
\begin{equation}
  \label{eq:rouche:01}
  z^b - G^*(\lambda -\lambda z) = 0,
\end{equation}
has $b$ roots inside $|z|=1+\epsilon$ for a small real number
$\epsilon>0$. One of them is $z=1$. Let $z_m^*$ ($m=1,2,\ldots,b-1$)
denote the $m$-th root of (\ref{eq:rouche:01}) different from 1. Hence,
from (\ref{eq:boundarypgf:01}), we have the following $b-1$ equations
\begin{equation}
  \label{eq:rouche:02}
  \sum_{k=1}^b \{(z_m^*)^{b+1} - (z_m^*)^k\} \cdot \alpha_k = 0, \quad
  m=1,2,\ldots, b-1. 
\end{equation}

From (\ref{eq:pzx:01}), we obtain
\begin{eqnarray}
  \int_0^\infty P(z;x) \rmd x &=& \int_0^\infty 
     P(z;0)\{1-G(x)\}\exp\{-\lambda (1-z)x\} \rmd x \nonumber\\
     &=& P(z;0) \frac{1-G^*(\lambda -\lambda z)}{\lambda - \lambda z}.
  \label{eq:pgfint:01}
\end{eqnarray}
From (\ref{eq:pgfint:01}), $P(z)$ is yielded as
\begin{equation}
  \label{eq:pgfpz:01}
  P(z) = P_0 + P(z;0) \frac{1-G^*(\lambda -\lambda z)}{\lambda - \lambda z}.
\end{equation}

Substituting $z=1$ into (\ref{eq:boundarypgf:01}) yields
\begin{equation}
  \label{eq:pgfpz:02}
  P(1;0) = {\sum_{k=1}^b (b+1-k)\alpha_k \over b-\lambda E[S]}.
\end{equation}
Note that (\ref{eq:pgfpz:02}) holds if the following stability
condition holds.
\begin{equation}
  \label{eq:pgfpz-stability}
  \lambda E[S]< b.
\end{equation}

Noting that $P(1)=1$, we obtain from (\ref{eq:balance:01}),
(\ref{eq:pgfpz:01}) and (\ref{eq:pgfpz:02})
\begin{equation}
  \label{eq:pgfpz:03}
  \sum_{k=1}^b \left\{
    {(b+1-k)E[S] \over b-\lambda E[S]} + \frac{1}{\lambda}\right\}
    \cdot\alpha_k = 1.
\end{equation}
From (\ref{eq:rouche:02}) and (\ref{eq:pgfpz:03}), $\alpha_k$'s are
uniquely determined.

Using $\alpha_k$'s, (\ref{eq:balance:01}) and
(\ref{eq:boundarypgf:01}) can be rewritten as
\[
P_0 = \frac{1}{\lambda}\sum_{k=1}^b \alpha_k,\quad
P(z;0) = {\sum_{k=1}^b(z^{b+1} - z^k) \alpha_k \over z^b -G^*(\lambda -
  \lambda z)}.
\]
Substituting the above expressions into (\ref{eq:pgfpz:01}) yields
\begin{equation}
  \label{eq:pgfpz:04}
  P(z) = \frac{1}{\lambda}\sum_{k=1}^b \alpha_k +
     {\sum_{k=1}^b(z^{b+1} - z^k) \alpha_k \over z^b -G^*(\lambda -
  \lambda z)}\cdot {1-G^*(\lambda - \lambda z) \over \lambda -\lambda z}.
\end{equation}

The mean number of transactions in the system $E[N]$ is given by
\begin{eqnarray*}
  \lefteqn{E[N] = \left(\frac{\rmd}{\rmd z}P(z)\right)_{z=1}} \\
     &=& {1\over 2\lambda (b-\lambda E[S])}  
      \left(\rule{0pt}{18pt}
       \sum_{k=1}^b \alpha_k\left[\rule{0pt}{14pt}
         b(b-1)+\{(b+1)b-k(k-1)\}\lambda E[S] \right.\right.\\
     & & \left.\left. +(b-k)\lambda^2E[S^2]\rule{0pt}{14pt}\right]
        - \lambda\left\{b(b-1)-\lambda^2
          E[S^2]\right\}\rule{0pt}{18pt}\right).
\end{eqnarray*}

Let $T$ denote the sojourn time of a transaction. Note that in Bitcoin
case, $T$ is the transaction-confirmation time.  From Little's
theorem, the mean sojourn time of a transaction $E[T]$ is yielded as
\begin{eqnarray*}
  \lefteqn{E[T] = \frac{E[N]}{\lambda}}\nonumber\\
     &=& {1\over 2\lambda^2 (b-\lambda E[S])}  
      \left(\rule{0pt}{18pt}
       \sum_{k=1}^b \alpha_k\left[\rule{0pt}{14pt}
         b(b-1)+\{(b+1)b-k(k-1)\}\lambda E[S] \right.\right.\nonumber\\
     & & \left.\left. +(b-k)\lambda^2E[S^2]\rule{0pt}{14pt}\right]
        - \lambda\left\{b(b-1)-\lambda^2
          E[S^2]\right\}\rule{0pt}{18pt}\right).
\end{eqnarray*}
Since $E[T]$ is a function of $\lambda$, we define $f(\lambda)\equiv
E[T]$ for the following subsection.

\subsection{Proof of Priority Queueing Analysis}
\label{app:sec:proof:sub:priority}

Consider a sample path in which a low-priority transaction arrives at
the system in idle and starts a busy period. Consider also the other
sample path in which a high-priority transaction arrives at the system
in idle and starts a busy period. In our model, note that the elapsed
service time of high-priority transaction in the former sample path is
smaller than that in the latter one. This implies that our priority
queueing model is not work conserving.

When the system utilization $\sum_{i=1}^c \lambda_i E[S]$ is large,
however, the busy period becomes large and the idle state rarely
occurs.  In such high-utilization environment, the event that an
low-priority transaction starts a busy period rarely occurs.

Assuming that the system is work conserving \cite{Wolf89}, we have
\begin{equation}
  \label{eq:priority:01}
  f(\overline{\lambda}_c) = \sum_{k=1}^c
  \frac{\lambda_k}{\overline{\lambda}_c} E[T_k],
\end{equation}
where $E[T_k]$ is the sojourn time of class $k$ transactions.  Since
class 1 transactions are served similarly to the batch service
analyzed in the previous subsection, $E[T_1]$ is given by
\begin{equation}
  \label{eq:meandelay:00}
  E[T_1]=f(\lambda_1).
\end{equation}

Note that for $i<j$, any class-$j$ transactions don't affect the
service of class-$i$ transactions. In other words, $T_i$ is
independent of transactions whose priority class is lower than
$i$, and hence (\ref{eq:priority:01}) holds not only $c$ but also
$i=2,3,\ldots,c-1$. This yields
\begin{eqnarray*}
  f(\overline{\lambda}_i) &=& \sum_{k=1}^i
  \frac{\lambda_k}{\overline{\lambda}_i} E[T_k]\\
  &=& \sum_{k=1}^{i-1}
  \frac{\lambda_k}{\overline{\lambda}_i} E[T_k]
  + \frac{\lambda_i}{\overline{\lambda}_i} E[T_i].
\end{eqnarray*}
We then obtain
\begin{equation}
  \label{eq:meandelay:01}
  E[T_i] = \frac{1}{\lambda_i}
  \left(\overline{\lambda}_i f(\overline{\lambda}_i) -
    \sum_{k=1}^{i-1} \lambda_k E[T_k]
  \right), \quad i=2,3,\ldots, c.
\end{equation}
Note that $E[T_i]$'s can be calculated recursively by
(\ref{eq:meandelay:00}) and (\ref{eq:meandelay:01}).

\section{Block-Generation Time Distribution}
\label{app:sec:block-gener-time}

In this section, we prove that the block-generation time follows an
exponential distribution. 

Remind that each miner node tries to solve the mathematical problem
based on a cryptographic hash algorithm. This problem consists of
calculating a hash of the block being formed and adjusting a nonce
word such that the resulting hash value is smaller than or equal to a
targeted value called difficulty \cite{TS16}.  The number of nonce
words the miner tries is tremendously huge, making the mathematical
problem too difficult. Here, we assume that the number of nonce words
is finite and equal to $M$.

When a miner tries one nonce word and finds it incorrect, the miner
immediately tries the other word and never tries the same nonce again.
We can model the mining process as the following urn model without
replacement. That is, we have an urn containing $M$ balls, of which
$M-1$ are white and one is red. The red ball is a winner. One ball is
withdrawn from the urn at a time, and then it is removed from the urn
without replacement.  In this setting, the probability that the red
ball is drawn at $k$th trial is $1/M$, a discrete-uniform
distribution. If one trial is performed at a unit time, the
probability that the red ball is drawn at time $k$ is also given by
$1/M$.

Suppose that there exist $n$ miner nodes. Without loss of generality,
we assume that the number of winning nonce words is one and that the
number of nonce words is $M$.  Let $Y_i$ ($i=1,2,\ldots,n$) denote the
time at which miner $i$ finds a winning nonce word. We assume $Y_i$'s
are i.i.d. We define the block-generation time as $L_n$. 
Then we have
\[
L_n = \min\{Y_1, Y_2,\ldots,Y_n\}.
\]

For simplicity, we assume $Y_i$ follows a continuous-uniform
distribution $U(0,M)$, that is,
\[
\Pr\{Y_i \leq x\} = \left\{
  \begin{array}{cl}
    x/M, & 0\leq x\leq M, \\
    0, & \mbox{others.}
  \end{array}
\right.
\]
Then, the distribution of $L_n$ is yielded as
\begin{align*}
  \Pr\{L_n\leq x\} &= \Pr\{\min(X_1,\ldots,X_n) \leq x\} \\
     &= 1-\Pr\{\min(X_1,\ldots,X_n) > x\} \\
     &= 1-\left(1-\frac{x}{M}\right)^n.
\end{align*}

Now consider a limit distribution of $(L_n-b_n)/a_n$ for sequences of
constants $\{a_n>0\}$ and $b_n$. In extreme value theory, it is known
that the distribution of $(L_n-b_n)/a_n$ for the minimum of $X_i$'s
converges to a Weibull distribution when $X_i$ follows uniform
distribution (\cite{KN00} p.~59, Table A.1).

For $0\leq z\leq n$, setting $a_n=1/n$ and $b_n=0$ yields
\begin{align*}
\Pr\left\{ \frac{L_n-b_n}{a_n}\leq z \right\} &= 1-\left\{
  1-\frac{(z/M)}{n} \right\}^n \\
   &\rightarrow 1 - e^{-z/M}, \quad n\rightarrow\infty.
\end{align*}
From this result, for a large $n$, we can approximate the distribution
of $L_n$ by 
\[
\Pr\{L_n\leq x\} \approx 1-\exp\{-(n/M)x\}.
\]
This result implies that $L_n$ follows an exponential distribution
when $n$ is large.

Figure \ref{fig:ne:fitting} in subsection
\ref{sec:numericalexamples:sub:block-generation-time} shows a good
agreement between exponential distribution and measured data.
According to \cite{bitnode}, the number of miner nodes is about 5,700,
and hence this number is large enough so that the block-generation
time is well approximated by exponential distribution.







\end{document}